\documentclass[a4paper,10pt]{article}
\usepackage{test,paralist,subcaption}
\usepackage[authoryear,round]{natbib}
\usepackage{hyperref}
\usepackage{doi}

%\usepackage{endfloat} % force figures and tables at the end

%opening
\title{Susceptible-Infected-Recovered (SIR) Dynamics of COVID-19 and Economic Impact}
\author{Alexis Akira Toda\thanks{Department of Economics, University of California San Diego. Email: \href{mailto:atoda@ucsd.edu}{atoda@ucsd.edu}.}}

\numberwithin{equation}{section}
\numberwithin{thm}{section}

\newcommand{\Tincl}{14}
\newcommand{\betaMed}{0.29}
\newcommand{\ymaxMed}{28}

\begin{document}

\maketitle

\begin{abstract}
I estimate the Susceptible-Infected-Recovered (SIR) epidemic model for Coronavirus Disease 2019 (COVID-19). The transmission rate is heterogeneous across countries and far exceeds the recovery rate, which enables a fast spread. In the benchmark model, 28\% of the population may be simultaneously infected at the peak, potentially overwhelming the healthcare system. The peak reduces to 6.2\% under the optimal mitigation policy that controls the timing and intensity of social distancing. A stylized asset pricing model suggests that the stock price temporarily decreases by 50\% in the benchmark case but shows a W-shaped, moderate but longer bear market under the optimal policy.

\medskip

{\bf Keywords:} coronavirus, mitigation, pandemic, SIR model.

\medskip

{\bf JEL codes:} C6, G12, I18
\end{abstract}

\section{Introduction}

The novel coronavirus disease that was first reported in Wuhan, China in December 2019 (COVID-19) is quickly spreading around the world. As of {\today}, the total number of cases exceeds 460,000 and the disease has claimed more than 20,000 lives globally. Since March 2020, while new cases in China appears to have settled down, the number of cases are exponentially growing in the rest of the world. To prevent the spread of the new virus, many governments have introduced draconian measures such as restricting travel, ordering social distancing, and closing schools, bars, restaurants, and other businesses.

In a time of such extreme uncertainty, making economic decisions becomes challenging because pandemics are rare. The most recent comparable episode is the Spanish flu of 1918 \citep{Trilla2008}, so pandemics are likely to occur at most once during one's lifetime. Nevertheless, individuals need to make everyday decisions such as how to manage inventories of staples, how much to consume and save, when to buy or sell stocks, etc., and these decisions depend on the expectation of how long and severe the epidemic is. Governments must also make decisions such as to what extent imposing travel restrictions, social distancing, closure of schools and businesses, etc., and for how long \citep{Anderson_2020}.

When past experience or data are not so relevant in new situations such as the COVID-19 pandemic, simple mathematical models are useful in analyzing the current situation and predicting the near future. This paper aims to help decision making by building a mathematical epidemic model, estimating it using the up-to-date data of COVID-19 cases around the world, making out-of-sample predictions, and discussing optimal policy and economic impact. The model is the \cite{KermackMcKendrick1927} Susceptible-Infected-Recovered (SIR) model and is relatively simple. An infected individual interacts with other agents and transmits the disease at a certain rate if the other agent is susceptible. An infected individual also recovers (or dies) at a certain rate. The model can be described as a system of ordinary differential equations, which is nonlinear due to the interaction between the infected and susceptible. The behavior of the model is completely determined by the transmission rate ($\beta$), the recovery rate ($\gamma$), and the initial condition. Despite the nonlinearity, the model admits an exact analytical solution in parametric form \citep{HarkoLoboMak2014}, which is convenient for estimation and prediction. Using this model, I theoretically derive the condition under which an epidemic occurs and characterize the peak of the epidemic.

I next take this model to the data. Because the situation and policies surrounding COVID-19 is rapidly evolving, I use the most recent two weeks ({\Tincl} days) of cases and estimate the model parameters by nonlinear least squares. Except for China, Japan, and Korea, which are early epicenters of the outbreak, the transmission rate $\beta$ is around 0.2--0.4 and heterogeneous across countries. The estimated transmission rates far exceed the recovery rate $\gamma$, which is about 0.1 based on the clinical course of COVID-19. Due to the high transmission rate and lack of herd immunity, in the absence of mitigation measures such as social distancing, the virus spreads quickly and may infect around 30 percent of the population at the peak of the epidemic. Using the model, I conduct an experiment where the government introduces temporary mitigation measures and succeeds in reducing the transmission rate. If the mitigation measures are taken too early, the peak is delayed but the epidemic restarts with no effect on the peak because the population does not acquire herd immunity. Assuming the government can take drastic measures up to 12 weeks, the optimal policy is start mitigation measures once the number of cases reaches 6.3\% of the population. Under the optimal policy, the peak infection rate reduces to 6.2\%. Therefore unless vaccines are expected to be developed in the near future, the draconian measures currently taken in many countries may be suboptimal, and it may be desirable to postpone them.

To evaluate the potential economic impact of COVID-19, I build a stylized production-based asset pricing model. Capitalists hire labor at competitive markets and infected workers are unable to work. Because the epidemic (temporarily) drastically reduces the labor supply, output goes down and the model calibration suggests that the stock market crashes by 50\% during the epidemic, though the crash is short-lived. Under the optimal policy, the stock price exhibits a W-shaped pattern and remains about 10\% undervalued than the steady state for about half a year.

\section{SIR epidemic model}\label{sec:SIR}

I first present the compartment model of epidemics following \cite{KermackMcKendrick1927}.

The society consists of $N$ individuals, among which $S$ are susceptible to an infectious disease (they are neither infected nor have immunity) and $I$ are infected. (We ignore population growth because an epidemic occurs in a relatively short interval.) Let $R=N-S-I$ be the number of individuals who are immune (possibly because they are vaccinated, infected and recovered, or dead). Suppose that individuals meet each other randomly, and conditional of an infected individual meeting a susceptible individual, the disease is transmitted with some probability. Let $\beta>0$ be the rate at which an infected individual meets a person and transmits the disease if susceptible. Let $\gamma>0$ be the rate at which an infected individual recovers or dies. Then the following differential equations hold.
\begin{subequations}\label{eq:SIR}
\begin{align}
\diff S/\diff t&=-\beta SI/N, \label{eq:SIR.s}\\
\diff I/\diff t&=\beta SI/N-\gamma I, \label{eq:SIR.i}\\
\diff R/\diff t&=\gamma I. \label{eq:SIR.r}
\end{align}
\end{subequations}
To see why \eqref{eq:SIR.s} holds, note that an infected individual can transmit to $\beta$ people per unit of time if all of them are susceptible, but the probability of meeting a susceptible individual is only $S/N$. Thus, $I$ infected individuals can transmit to $I\times \beta\times (S/N)=\beta SI/N$ individuals per unit of time. \eqref{eq:SIR.i} holds because the change in the number of infected individuals equals the newly infected minus closed cases (either due to recovery or death).

Letting $x=S/N$, $y=I/N$, $z=R/N$ be the fraction of susceptible, infected, and recovered individuals in the society, dividing all equations in \eqref{eq:SIR} by $N$, we obtain
\begin{subequations}\label{eq:xyz}
\begin{align}
\dot{x}&=-\beta xy, \label{eq:xyz.x}\\
\dot{y}&=\beta xy-\gamma y, \label{eq:xyz.y}\\
\dot{z}&=\gamma y,\label{eq:xyz.z}
\end{align}
\end{subequations}
where $\dot{x}=\diff x/\diff t$. Although the system of differential equations \eqref{eq:xyz} is nonlinear, \cite{HarkoLoboMak2014} obtain an exact analytical solution in parametric form.

\begin{prop}\label{prop:HLM}
Let $x(0)=x_0>0$, $y(0)=y_0>0$, $z(0)=z_0\ge 0$ be given, where $x_0+y_0+z_0=1$. Then the solution to \eqref{eq:xyz} is parametrized as
\begin{subequations}\label{eq:HLM}
\begin{align}
x(t)&=x_0v, \label{eq:HLM.x}\\
y(t)&=\frac{\gamma}{\beta}\log v-x_0v+x_0+y_0, \label{eq:HLM.y}\\
z(t)&=-\frac{\gamma}{\beta}\log v+z_0, \label{eq:HLM.z}
\end{align}
\end{subequations}
where
\begin{equation}
t=\int_v^1\frac{\diff \xi}{\xi(\beta x_0(1-\xi)+\beta y_0+\gamma\log \xi)}. \label{eq:v}
\end{equation}
\end{prop}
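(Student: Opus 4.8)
The plan is to \emph{verify} directly that the parametric formulas \eqref{eq:HLM}--\eqref{eq:v} define a solution of \eqref{eq:xyz} with the prescribed initial data, rather than rederiving them from scratch. First I would record two preliminary facts. Adding \eqref{eq:xyz.x}, \eqref{eq:xyz.y}, and \eqref{eq:xyz.z} gives $\dot x+\dot y+\dot z=0$, so $x+y+z$ stays equal to $x_0+y_0+z_0=1$; hence it suffices to check the equations for $x$ and $z$ and then recover the one for $y$ from $y=1-x-z$. Second, a one-line computation shows that, after substituting $\xi=v$, the bracketed quantity in the denominator of \eqref{eq:v} is exactly $\beta$ times the right-hand side of \eqref{eq:HLM.y}: $\beta x_0(1-v)+\beta y_0+\gamma\log v=\beta y$. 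This identity is the crux of the argument.

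Next I would regard $v$ as the function of $t$ implicitly defined by \eqref{eq:v}. At $t=0$ the integral vanishes, which forces $v=1$; substituting $v=1$ into \eqref{eq:HLM} returns $x(0)=x_0$, $y(0)=y_0$, $z(0)=z_0$, so the initial conditions hold. Differentiating \eqref{eq:v} in $t$ via the fundamental theorem of calculus gives $\dot v=-v\bigl(\beta x_0(1-v)+\beta y_0+\gamma\log v\bigr)$, i.e. $\dot v=-\beta vy$ by the identity above. Then differentiating \eqref{eq:HLM.x} gives $\dot x=x_0\dot v=-\beta x_0 vy=-\beta xy$, which is \eqref{eq:xyz.x}; differentiating \eqref{eq:HLM.z} gives $\dot z=-(\gamma/\beta)\dot v/v=\gamma y$, which is \eqref{eq:xyz.z}; and summing \eqref{eq:HLM.x}--\eqref{eq:HLM.z} shows $x+y+z=x_0+y_0+z_0=1$, so $y=1-x-z$ and $\dot y=-\dot x-\dot z=\beta xy-\gamma y$, which is \eqref{eq:xyz.y}. (One could instead \emph{derive} the formulas: dividing \eqref{eq:xyz.x} by \eqref{eq:xyz.z} yields $\diff x/\diff z=-(\beta/\gamma)x$, hence $x=x_0\exp(-(\beta/\gamma)(z-z_0))$; setting $v:=x/x_0$ gives \eqref{eq:HLM.x} and \eqref{eq:HLM.z}, then $y=1-x-z$ gives \eqref{eq:HLM.y}, and the separable equation $\dot v=-\beta vy$ gives \eqref{eq:v}. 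I would favour the verification route since it avoids worrying about invertibility a priori.)

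The one genuine subtlety, and the step I expect to be the main obstacle, is justifying that \eqref{eq:v} really defines $v$ as a smooth, single-valued function of $t$ on the relevant interval. For this I would note that along a solution with $y_0>0$ one has $\dot y=(\beta x-\gamma)y\ge-\gamma y$, so $y(t)\ge y_0 e^{-\gamma t}>0$ for all $t\ge 0$; by the identity above the integrand in \eqref{eq:v} is therefore strictly positive on the range of $v$ swept out by the trajectory, namely $(x_\infty/x_0,1]$. Hence the right-hand side of \eqref{eq:v} is a strictly decreasing continuous function of its lower limit $v$ and can be inverted to give $v=v(t)$; a short estimate near $v=x_\infty/x_0$, where the integrand has a simple-pole-type singularity because $y$ vanishes there to first order in $v$, shows the integral diverges, so the parametrization covers all $t\in[0,\infty)$. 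Uniqueness of the solution, and hence the identification of this parametric family with \emph{the} solution, follows from the Picard--Lindel\"of theorem since the vector field in \eqref{eq:xyz} is locally Lipschitz. This completes the plan.
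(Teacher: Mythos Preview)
Your verification is correct and self-contained. The paper, however, does not actually prove this proposition: its entire proof is a citation to Equations (26)--(29) of \cite{HarkoLoboMak2014} together with the remark that the parametrization has been relabelled. So your approach goes well beyond the paper's, supplying both the direct check that \eqref{eq:HLM}--\eqref{eq:v} satisfy \eqref{eq:xyz} with the right initial data and the well-definedness/coverage argument for $v(t)$ on $[0,\infty)$. One small comment on the latter: your positivity argument for the integrand appeals to $y(t)>0$ along the ODE trajectory, which reads as slightly circular since at that stage you are still establishing that the parametrization \emph{is} a trajectory. A cleaner route---and this is precisely what the paper does, but only in the proof of the \emph{next} proposition---is to work directly with $f(v):=x_0(1-v)+y_0+(\gamma/\beta)\log v$: it is concave, $f(1)=y_0>0$, and $f(v)\to-\infty$ as $v\downarrow 0$, so $f>0$ on an interval $(v^*,1]$ with a simple zero at $v^*$, which immediately gives both the strict monotonicity of \eqref{eq:v} in $v$ and the divergence of the integral at $v^*$. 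With that tweak your argument is airtight and strictly more informative than the paper's citation.
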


\begin{proof}
See Equations (26)--(29) in \cite{HarkoLoboMak2014}. The parametrization has been changed slightly for convenience.
\end{proof}

Using Proposition \ref{prop:HLM}, we can study the qualitative properties of the epidemic.

\begin{prop}\label{prop:epidemic}
Let everything be as in Proposition \ref{prop:HLM}. Then the followings are true.
\begin{enumerate}
\item In the long run, fraction $v^*\in (0,1)$ of susceptible individuals will not be infected (fraction $1-v^*$ infected), where $v^*$ is the unique solution to
\begin{equation}
x_0(1-v)+y_0+\frac{\gamma}{\beta}\log v=0.\label{eq:vstar}
\end{equation}
\item If $\beta x_0\le \gamma$, then $\diff y/\diff t\le 0$: there is no epidemic. Furthermore, $v^*\to 1$ as $y_0\to 0$.
\item If $\beta x_0>\gamma$, then there is an epidemic. The number of infected individuals reaches the maximum when $\beta x(t_{\max})=\gamma$, at which point the fraction
\begin{equation}
y_{\max}=y(t_{\max})=\frac{\gamma}{\beta}\log \frac{\gamma}{\beta x_0}-\frac{\gamma}{\beta}+x_0+y_0\label{eq:ymax}
\end{equation}
of population is infected. The maximum infection rate $y_{\max}$ is increasing in $x_0,y_0$ and decreasing in $\gamma/\beta$.
\end{enumerate}
\end{prop}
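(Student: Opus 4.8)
The plan is to reduce everything to the behaviour of the single auxiliary function
\[
f(v) := x_0(1-v) + y_0 + \frac{\gamma}{\beta}\log v, \qquad v \in (0,1],
\]
which is the left-hand side of \eqref{eq:vstar}, equals (up to the factor $\beta$) the second factor in the denominator of the integrand of \eqref{eq:v}, and by \eqref{eq:HLM.y} equals $y$ expressed through $v$, i.e.\ $y(t)=f(v(t))$ where $v=x/x_0$. First I would record the shape of $f$: $f(1)=y_0>0$, $f(v)\to-\infty$ as $v\to 0^+$, and $f'(v)=-x_0+\gamma/(\beta v)$, so $f$ is strictly increasing on $\bigl(0,\gamma/(\beta x_0)\bigr)$ and strictly decreasing afterwards. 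In either regime $\beta x_0\le\gamma$ or $\beta x_0>\gamma$ this yields a unique zero $v^*$, with $0<v^*<\min\{1,\gamma/(\beta x_0)\}$ and $f>0$ on $(v^*,1]$. Consequently the integrand in \eqref{eq:v} is positive, so $v\mapsto t(v)$ is a strictly decreasing bijection from $(v^*,1]$ onto $[0,\infty)$; since $f$ has a simple zero at $v^*$ (because $f'(v^*)>0$), the integral diverges logarithmically there, i.e.\ $t\to\infty$ exactly as $v\downarrow v^*$.

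For part 1, integrating \eqref{eq:xyz.z} and using $x,y,z\ge 0$ with $x+y+z\equiv 1$ shows $z$ is nondecreasing and bounded, so $z_\infty:=\lim_{t\to\infty}z(t)$ is finite; by \eqref{eq:HLM.z} this excludes $v\to 0$, hence $v(t)\downarrow v^*$ and, by \eqref{eq:HLM.y}, $y(t)\to f(v^*)=0$. Therefore the eventual susceptible fraction is $x_0v^*$, i.e.\ a fraction $v^*\in(0,1)$ of the initially susceptible escape infection, with $v^*$ characterised by \eqref{eq:vstar}.

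For part 2, if $\beta x_0\le\gamma$ then \eqref{eq:xyz.x} gives $\dot x\le 0$, hence $x(t)\le x_0$, so by \eqref{eq:xyz.y} $\dot y=y(\beta x-\gamma)\le y(\beta x_0-\gamma)\le 0$: no epidemic. Letting $y_0\to 0$, the zero $v^*$ of $f$ moves to $1$ by continuity and the strict monotonicity of $f$ near $v=1$ in this regime. For part 3, if $\beta x_0>\gamma$ then $\dot y(0)=y_0(\beta x_0-\gamma)>0$, so an epidemic occurs; since $x$ decreases strictly and continuously from $x_0>\gamma/\beta$ to $x_0v^*<\gamma/\beta$, there is a unique $t_{\max}$ with $\beta x(t_{\max})=\gamma$, and $\dot y=y(\beta x-\gamma)$ switches sign from $+$ to $-$ there, so $y$ is maximised at $t_{\max}$. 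Substituting $v(t_{\max})=x(t_{\max})/x_0=\gamma/(\beta x_0)$ into \eqref{eq:HLM.y} gives \eqref{eq:ymax}. Finally, writing $\rho:=\gamma/\beta$ so that $y_{\max}=\rho\log(\rho/x_0)-\rho+x_0+y_0$, differentiation gives $\partial y_{\max}/\partial y_0=1>0$, $\partial y_{\max}/\partial x_0=1-\rho/x_0>0$ (since $\rho<x_0$ here), and $\partial y_{\max}/\partial\rho=\log(\rho/x_0)<0$, which is the stated monotonicity in $\gamma/\beta$.

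The steps are individually elementary; the part needing the most care is the bookkeeping that ties the parametric solution to genuine time behaviour — namely establishing that $t\to\infty$ corresponds to $v\to v^*$ (equivalently, that the epidemic does not exhaust the susceptible pool, $x_\infty>0$) and nailing down the unimodal shape of $f$ so that $v^*<\gamma/(\beta x_0)$. The latter inequality is precisely what guarantees that $\beta x(t)=\gamma$ is actually attained and also fixes the sign of $\partial y_{\max}/\partial x_0$.
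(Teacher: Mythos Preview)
Your argument is correct and follows essentially the same route as the paper: both proofs hinge on the auxiliary function $f(v)=x_0(1-v)+y_0+\frac{\gamma}{\beta}\log v$, establish that $f$ has a unique zero $v^*\in(0,1)$ with $f>0$ on $(v^*,1]$, show the integral \eqref{eq:v} diverges at $v^*$ so that $v(t)\downarrow v^*$, and then read off parts 2 and 3 from the sign of $\beta x-\gamma$ together with the explicit formula \eqref{eq:HLM.y}. The only cosmetic differences are that the paper invokes concavity ($f''<0$) where you use the explicit unimodal shape via $f'$, and the paper computes $\dot y$ through the parametrisation while you use the ODE $\dot y=y(\beta x-\gamma)$ directly; your extra observation that boundedness of $z$ rules out $v\to 0$ is a harmless redundancy given the bijection you already established.
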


\begin{proof}
Let $f(v)=x_0(1-v)+y_0+\frac{\gamma}{\beta}\log v$ for $v\in (0,1]$. Then \eqref{eq:v} implies
\begin{equation}
t=\int_{v(t)}^1 \frac{\diff \xi}{\beta\xi f(\xi)}.\label{eq:vt}
\end{equation}
Since $f(1)=y_0>0$, it must be $v(0)=1$. The definite integral \eqref{eq:vt} is well-defined in the range $f(v)>0$. Since
\begin{align*}
f'(v)&=-x_0+\frac{\gamma}{\beta v},\\
f''(v)&=-\frac{\gamma}{\beta v^2}<0,
\end{align*}
$f$ is concave so the set $V=\set{v\in (0,1]|f(v)>0}$ is an interval. Since $f(v)\to -\infty$ as $v\downarrow 0$, we have $V=(v^*,1]$ for $v^*\in (0,1)$, where $v^*$ solves \eqref{eq:vstar}. Because $f$ can be approximated by a linear function around $v^*$, we get
$$\infty=\int_{v^*}^1\frac{\diff \xi}{\beta \xi f(\xi)},$$
so $v(\infty)=v^*$. Using \eqref{eq:HLM.x}, in the long run fraction $x(\infty)/x_0=v^*$ of susceptible individuals are not infected.

Since $f(v)>0$ on $V=(v^*,1]$, we have $v(t)\in (v^*,1]$ for all $t\ge 0$. By \eqref{eq:vt}, $v(t)$ is clearly decreasing in $t$. If $\beta x_0\le \gamma$, it follows from \eqref{eq:HLM.y} that
$$\dot{y}=\left(\frac{\gamma}{\beta v}-x_0\right)\dot{v}=\frac{\gamma-\beta x_0 v}{\beta v}\dot{v}\le 0$$
because $\dot{v}\le 0$ and $v\le 1$ implies $\gamma-\beta x_0v\ge \gamma-\beta x_0\ge 0$. Since $f(1)=0$ when $y_0=0$, $f'(1)=-x_0+\gamma/\beta\ge 0$ if $\beta x_0\le \gamma$, and $f''(v)<0$, it must be $v^*\to 1$ as $y_0\to 0$.

Finally, assume $\beta x_0>\gamma$. Then $\dot{y}(0)=(\beta x_0-\gamma)y_0>0$, so $y(t)$ initially increases. By \eqref{eq:xyz.y}, $y(t)$ reaches the maximum when $0=\dot{y}=\beta xy-\gamma y\iff x=\gamma/\beta$. Using \eqref{eq:HLM.x}, this is achieved when $\gamma/\beta=x_0v\iff v=\frac{\gamma}{\beta x_0}$. Substituting into \eqref{eq:HLM.y}, we obtain \eqref{eq:ymax}. Letting
$$y(\theta,x_0,y_0)=\theta \log\frac{\theta}{x_0}-\theta+x_0+y_0$$
for $\theta=\gamma/\beta$, it follows from simple algebra that
\begin{align*}
\partial y/\partial y_0&=1,\\
\partial y/\partial x_0&=-\frac{\theta}{x_0}+1=\frac{\beta x_0-\gamma}{\beta x_0}>0,\\
\partial y/\partial \theta&=\log \frac{\gamma}{\beta x_0}<0,
\end{align*}
so $y_{\max}$ is increasing in $x_0,y_0$ and decreasing in $\theta=\gamma/\beta$.
\end{proof}

Proposition \ref{prop:epidemic} has several policy implications for dealing with epidemics. First, the policy maker may want to prevent an epidemic. This is achieved when the condition $\beta x_0\le \gamma$ holds. Since before the epidemic the fraction of infected individuals $y_0$ is negligible, we can rewrite the no-epidemic condition as $\beta(1-z_0)\le \gamma$. Unlike bacterial infections, for which a large variety of antibiotics are available, there is generally no curative care for viral infections.\footnote{Currently, the only viruses against which antiviral drugs are available are the human immunodeficiency virus (HIV), herpes, hepatitis, and influenza viruses. See \cite{Razonable2011} for a review of treatments of the latter three viruses.} Therefore the recovery/death rate $\gamma$ is generally out of control. Hence the only way to satisfy the no-epidemic condition $\beta(1-z_0)\le \gamma$ is either
\begin{inparaenum}[(i)]
\item control transmission (reduce $\beta$), for example by washing hands, wearing protective gear, restricting travel, or social distancing, or
\item immunization (increase $z_0$).
\end{inparaenum}
The required minimum immunization rate to prevent an epidemic is $z_0=1-\gamma/\beta$.

Second, the policy maker may want to limit the economic impact once an epidemic occurs. Because the supply of healthcare services is inelastic in the short run, it is important to keep the maximum infection rate $y_{\max}$ in \eqref{eq:ymax} within the capacity of the existing healthcare system. This is achieved by lowering the transmission rate $\beta$.

\section{Estimation and prediction}\label{sec:estim}

In this section I estimate the SIR model in Section \ref{sec:SIR} and use it to predict the evolution of the COVID-19 pandemic.

\subsection{Data}

The number of cases of COVID-19 is provided by Center for Systems Science and Engineering at Johns Hopkins University (henceforth CSSE). The cumulative number of confirmed cases and deaths can be downloaded from the GitHub repository.\footnote{\url{https://github.com/CSSEGISandData/COVID-19/tree/master/csse_covid_19_data/csse_covid_19_time_series}} The time series starts on January 22, 2020 and is updated daily. Because countries are added as new cases are reported, the cross-sectional size increases every day. For the majority of countries, the CSSE data are at country level. However, for some countries such as Australia, Canada, and China, regional data at the level of province or state are available. In such countries, I aggregate across regions and use the country level data. Figure \ref{fig:data_Early} shows the number of COVID-19 cases in early epicenters, namely China, Iran, Italy, Japan, and Korea.

\begin{figure}[!htb]
\centering
\includegraphics[width=0.7\linewidth]{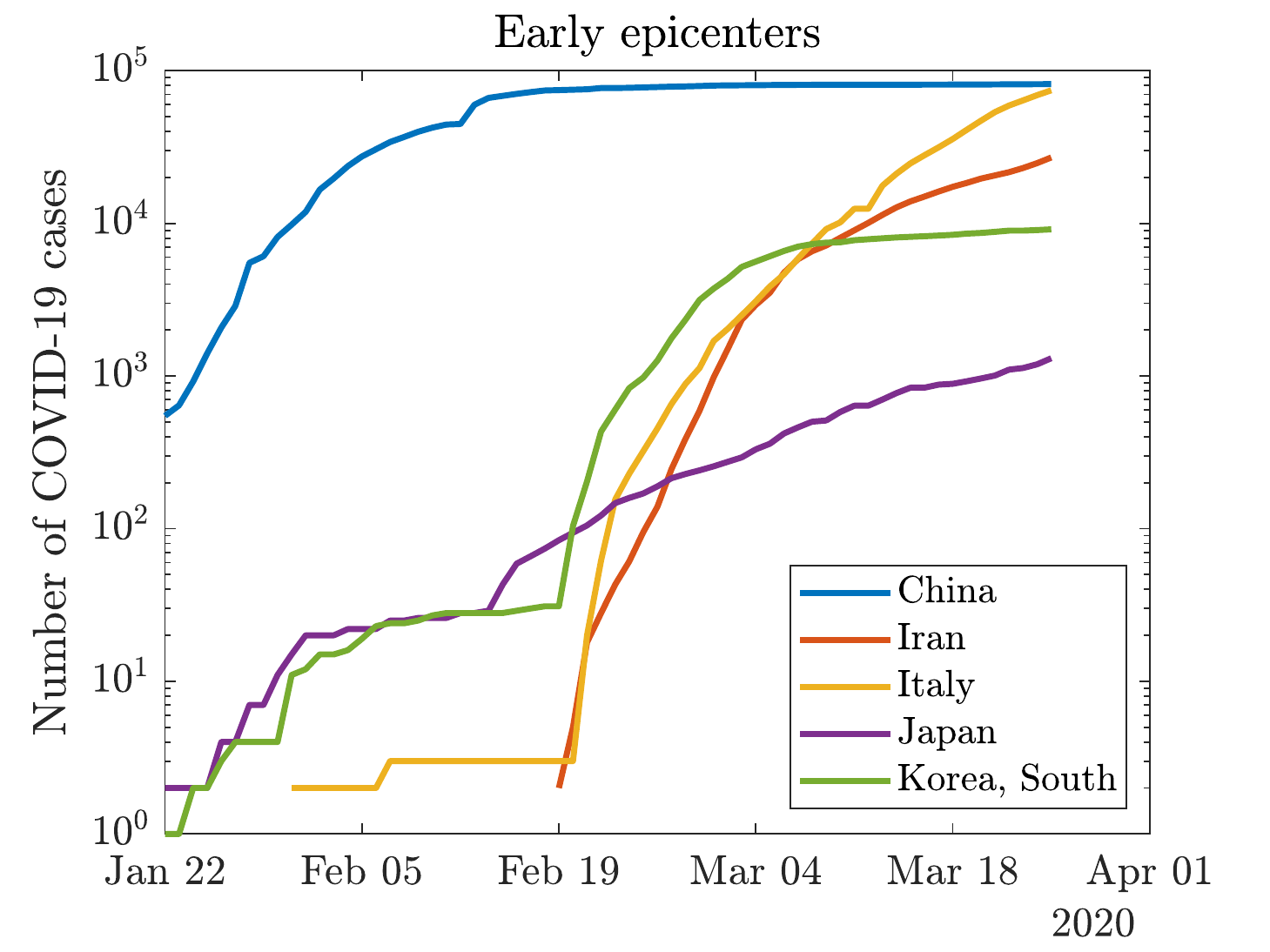}
\caption{Number of COVID-19 cases in early epicenters.}\label{fig:data_Early}
\end{figure}

\subsection{Estimation}

Estimation of the model poses significant challenges because the situation of COVID-19 is rapidly evolving. The model parameters are likely time-varying because new policies are introduced on a day-to-day basis, temperature and weather may affect the virus activity, and the virus itself my genetically mutate. For this reason, I only use the data from the two most recent weeks ({\Tincl} days).

I estimate the model parameters by nonlinear least squares, minimizing the distance between model outputs $(x,y,z)$ and data. Because the CSSE data only contains confirmed cases and deaths, but the SIR model abstracts from death, I define $c=y+z=1-x$ to be the fraction of infected or recovered cases in the model. The counterpart in the data is $\widehat{c}=C/N$, where $C$ is the number of confirmed cases and $N$ is population.\footnote{I use the 2015 population data from World Bank at \url{https://data.world/worldbank/total-population-per-country}.} Because the number of cases grows by many orders of magnitude within a short period of time, I define the loss function using log cases:
\begin{equation}
L(\beta,\gamma,y_0,z_0)=\sum_t\left(\log \widehat{c}(t)-\log c(t)\right)^2.\label{eq:lossfunc}
\end{equation}

Since I only include $c$ in the loss function \eqref{eq:lossfunc}, the parameters $\gamma$ and $z_0$, which govern the dynamics of fraction of recovered $z$, are not identified. Therefore I exogenously fix these two parameters. For the recovery rate $\gamma$, because the majority of patients with COVID-19 experience mild symptoms that resemble a common cold or influenza \citep{Zhou_2020}, which takes about 10 days to recover, I set $\gamma=1/10=0.1$. For $z_0$, I set it to one divided by population.\footnote{This number is likely a significant underestimate, but the results are not sensitive to $z_0$ as long as it is small.} Although the fraction of cases $c(t)$ is likely significantly underestimated because infected individuals do not appear in the data unless they are tested, it does not cause problems for estimating the parameter of interest (the transmission rate $\beta$) because under-reporting is absorbed by the constant $y_0$ in \eqref{eq:HLM.y}, which only affects the onset of the epidemic by a few weeks without changing the overall dynamics (see Figure \ref{fig:SIR_example}). To sum up, I estimate the remaining parameters $\beta$ and $y_0$ by numerically minimizing the loss function \eqref{eq:lossfunc}. Standard errors are calculated using the asymptotic theory of $M$-estimators. See Appendix \ref{sec:solve} for the solution algorithm of the SIR model.

\subsection{Results}

I estimate the SIR model for all countries that meet the following inclusion criteria:
\begin{inparaenum}[(i)]
\item the number of confirmed cases as of {\today} exceeds 1,000, and
\item the number of confirmed cases at the beginning of the estimation sample exceeds 10.
\end{inparaenum}
These countries are mostly early epicenters (China, Japan, Korea), European countries, and North America. Table \ref{t:SIR_estim} shows the estimated transmission rate ($\beta$), its standard error, the fraction of infected individuals at the peak ($y_{\max}$), number of days to reach the peak ($t_{\max}$), and the fraction of the population that is eventually infected. Figure \ref{fig:Italy} shows the time evolution of COVID-19 cases in Italy, which is the earliest epicenter outside East Asia.

\begin{table}[!htb]
\centering
\caption{Estimation of SIR model.}\label{t:SIR_estim}
\begin{tabular}{lrrrrr}
%\toprule
\hline
Country & $\beta$ & s.e. & $y_{\max}$ (\%) & $t_{\max}$ (days) & Total (\%) \\ 
\hline 
Australia & 0.29 & 0.052 & 29 & 67 & 93 \\ 
Austria & 0.29 & 0.005 & 29 & 57 & 93 \\ 
Belgium & 0.27 & 0.112 & 26 & 64 & 91 \\ 
Brazil & 0.37 & 0.002 & 37 & 60 & 97 \\ 
Canada & 0.33 & 0 & 33 & 60 & 96 \\ 
Chile & 0.37 & 0.223 & 37 & 54 & 97 \\ 
China & 0.0012 & 0 & 0.0059 & 0 & 0.006 \\ 
Czechia & 0.29 & 0.003 & 29 & 64 & 93 \\ 
Denmark & 0.12 & 0.001 & 1.5 & 315 & 31 \\ 
Ecuador & 0.48 & 0 & 46 & 42 & 99 \\ 
France & 0.24 & 0.005 & 22 & 74 & 88 \\ 
Germany & 0.28 & 0.005 & 28 & 60 & 93 \\ 
Iran & 0.11 & 0.002 & 0.49 & 470 & 19 \\ 
Ireland & 0.35 & 0.009 & 35 & 50 & 96 \\ 
Israel & 0.3 & 0.101 & 30 & 62 & 94 \\ 
Italy & 0.19 & 0.002 & 13 & 91 & 76 \\ 
Japan & 0.077 & 0.003 & 0.00051 & 0 & 0.0022 \\ 
Korea, South & 0.02 & 0 & 0.015 & 0 & 0.019 \\ 
Luxembourg & 0.42 & 0.011 & 42 & 36 & 98 \\ 
Malaysia & 0.26 & 0.01 & 24 & 80 & 90 \\ 
Netherlands & 0.25 & 0.002 & 24 & 69 & 90 \\ 
Norway & 0.15 & 0.001 & 7 & 144 & 60 \\ 
Pakistan & 0.31 & 0.006 & 31 & 76 & 94 \\ 
Poland & 0.31 & 0.002 & 31 & 69 & 94 \\ 
Portugal & 0.37 & 0.004 & 37 & 48 & 97 \\ 
Spain & 0.28 & 0.118 & 27 & 57 & 92 \\ 
Sweden & 0.15 & 0.002 & 6 & 173 & 57 \\ 
Switzerland & 0.28 & 0.169 & 27 & 55 & 92 \\ 
US & 0.38 & 0.001 & 39 & 48 & 98 \\ 
United Kingdom & 0.29 & 0.088 & 29 & 64 & 94 \\ 
%\bottomrule 
\hline
\end{tabular}
\caption*{\footnotesize Note: The table presents the estimation results of the SIR model in Section \ref{sec:SIR}. $\beta$ (s.e.): the transmission rate and standard error; $y_{\max}$: the fraction of infected individuals at the peak in \eqref{eq:ymax}; $t_{\max}$: the number of days to reach the peak; ``Total'': the fraction of the population that is eventually infected.}
\end{table}

\begin{figure}[!htb]
\centering
\includegraphics[width=0.7\linewidth]{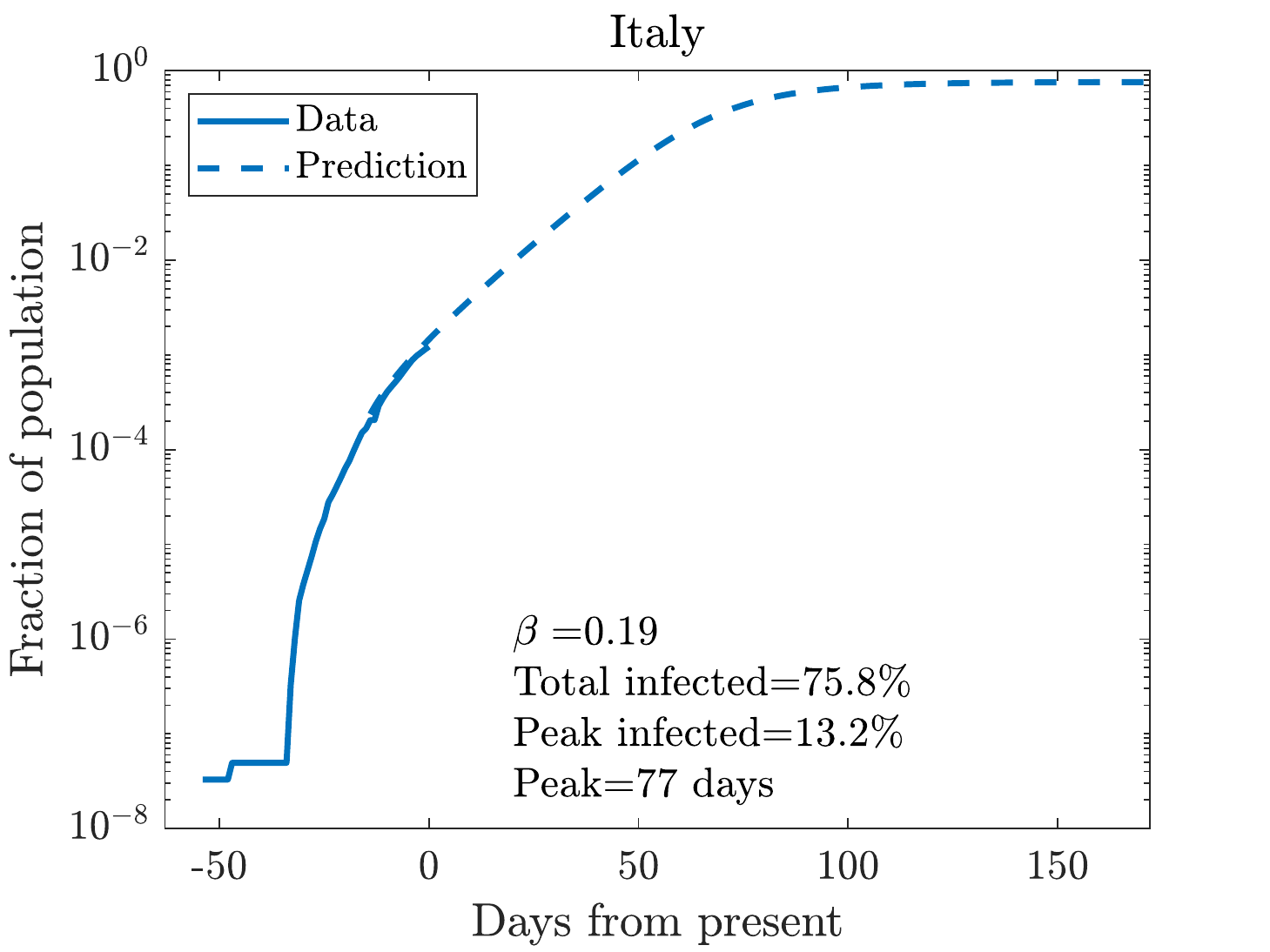}
\caption{Time evolution of COVID-19 cases in Italy.}\label{fig:Italy}
\end{figure}

We can make a few observations from Table \ref{t:SIR_estim}. First, the estimated transmission rates are heterogeneous across countries. While $\beta$ is low in China, the origin of COVID-19, and the neighboring countries (Japan and Korea), where the virus spread first, $\beta$ is very high at around 0.2--0.4 in other countries and the no-epidemic condition $\beta x_0\le \gamma$ fails. Despite the short time series ({\Tincl} days), the transmission rate is precisely estimated in most countries. Although current data is insufficient to draw any conclusion, there are a few possible explanations for the heterogeneity of $\beta$. First, the transmission rate $\beta$ may artificially appear high in later epicenters such as Europe and North America just because these countries were slow in adopting tests of COVID-19 and the testing (hence reporting) rate is increasing. Second, the heterogeneity in $\beta$ may be due to the fact that early epicenters have already taken mitigation measures of COVID-19. For example, while Japan closed all schools starting on March 2, many states in US have implemented similar measures such as closing schools, bars, and restaurants only around March 16, so we may not have yet seen the effect of such policies. Finally, it is possible that there are cultural differences. For example, school children in Japan are taught to wash their hands before eating and to gargle after returning home, which they practice, and (from personal experience) Japanese cities tend to be much cleaner than most cities in the world.

Second, according to the model, countries other than China, Japan, and Korea are significantly affected by the epidemic. If the current trend in the transmission rate $\beta$ continues, the epidemic will peak in May 2020, at which point around 30 percent of the population will be infected by the virus simultaneously. By the time the epidemic ends, more than 90 percent of the population is eventually infected. These numbers can be used to do a back-of-the-envelope calculation of health outcomes. In February 2020, the cruise ship Diamond Princess was put under quarantine for two weeks after COVID-19 was detected. All passengers were tested and tracked, among whom 712 tested positive and 8 died. Although this is not a representative sample because the cruise ship passengers tend to be older and wealthier, the mortality of COVID-19 should be around 1\% for this group and possibly lower for the general population. \cite{Zhou_2020} document that 54 patients died among 191 that required hospitalization in two hospitals in Wuhan. Therefore the ratio of patients requiring hospitalization to death is $191/54=3.56$. Thus, based on the model, the fraction of people requiring hospitalization at the peak is $y_{\max}\times 0.01 \times 3.56=1.0\%$ assuming $y_{\max}=\ymaxMed\%$, the median value in Table \ref{t:SIR_estim}.

\subsection{Optimal mitigation policy}

Using the estimated model parameters, we can predict the course of the epidemic. For this exercise, I consider the following scenario. The epidemic starts with the initial condition $(y_0,z_0)=(10^{-8},0)$. The benchmark transmission rate is set to the median value in Table \ref{t:SIR_estim}, which is $\beta=\betaMed$. When the number of total cases $c=y+z$ exceeds $10^{-5}$, the government introduces mitigation measures such as social distancing, and the transmission rate changes to either $\beta=0.2$ or $\beta=0.1$.\footnote{Using high-frequency data on influenza prevalence and quasi-experimental variation in mitigation measures, \cite{Adda2016} documents that school closures and travel restrictions are generally not cost-effective.} Mitigation measures are lifted after 12 weeks and the transmission rate returns to the benchmark value. I also consider the optimal mitigation policy, where the government chooses the threshold of cases $\bar{c}$ to introduce mitigation measures as well as the transmission rate $\beta$ to minimize the maximum infection rate $y_{\max}$.

Figure \ref{fig:mitigation} shows the fraction of infected and recovered over time. When the government introduces early but temporary mitigation measures (left panel), the epidemic is delayed but the peak is unaffected. This is because the maximum infection rate $y_{\max}$ in \eqref{eq:ymax} is mostly determined by $\beta$ and $\gamma$ since $(x_0,y_0)\approx (1,0)$, and the epidemic persists until the population acquires herd immunity so that the no-epidemic condition $\beta x\le \gamma$ holds. While early drastic mitigation measures might be useful to buy time to develop a vaccine, they may not be effective in mitigating the peak unless they are permanent.

The right panel in Figure \ref{fig:mitigation} shows the course of the epidemic under the optimal policy, which is to introduce mitigation measures such that $\beta=0.13$ when the number of cases reaches $\bar{c}=6.3\%$ of the population. Under this scenario, only $y_{\max}=6.2\%$ of the population is simultaneously infected at the peak as opposed to 28\% under the benchmark scenario. The intuition is that by waiting to introduce mitigation measures, a sufficient fraction of the population is infected (and acquires herd immunity) and thus reduces the peak.

\begin{figure}[!htb]
\centering
\begin{subfigure}{0.48\linewidth}
\includegraphics[width=\linewidth]{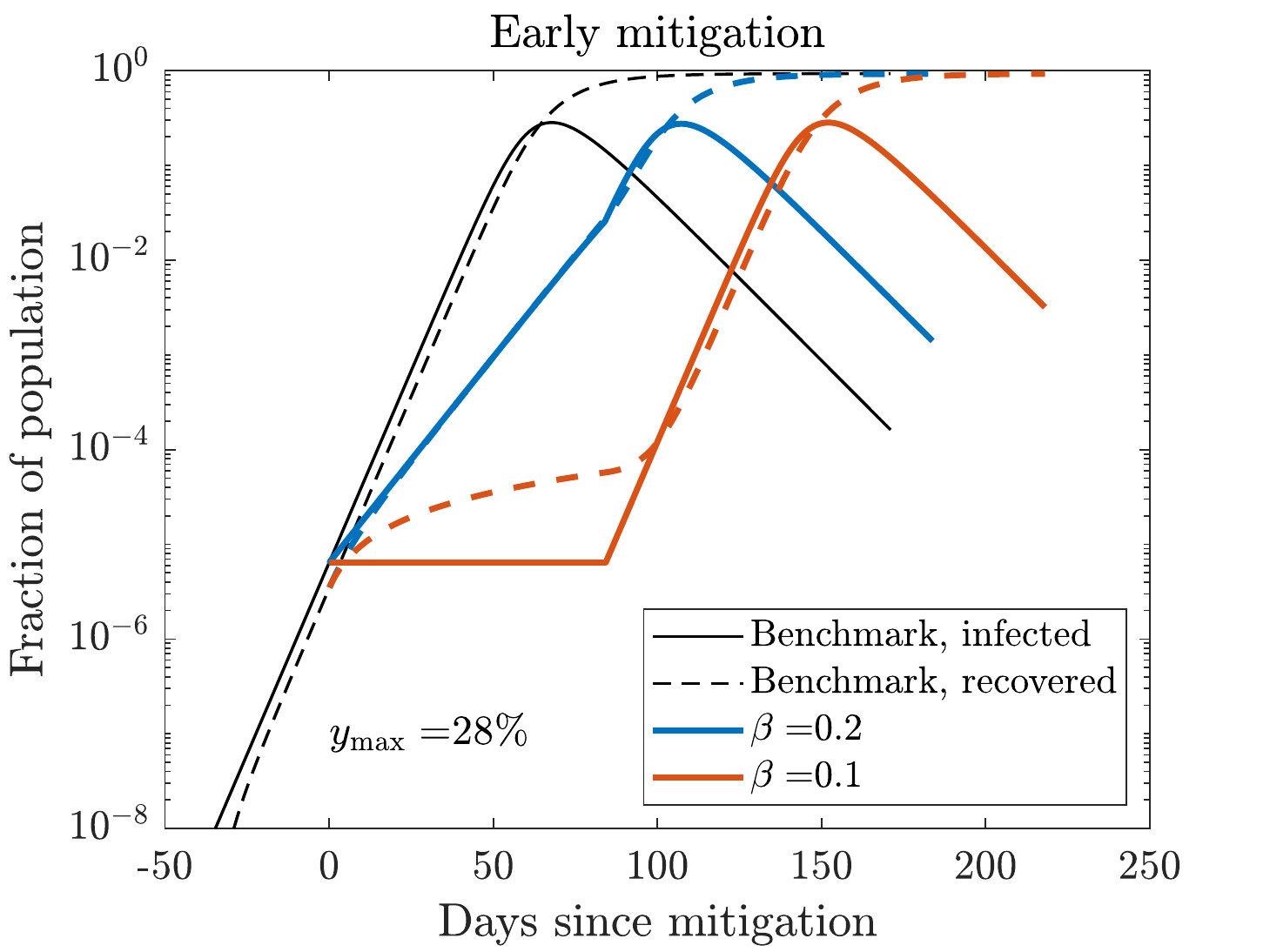}
\end{subfigure}
\begin{subfigure}{0.48\linewidth}
\includegraphics[width=\linewidth]{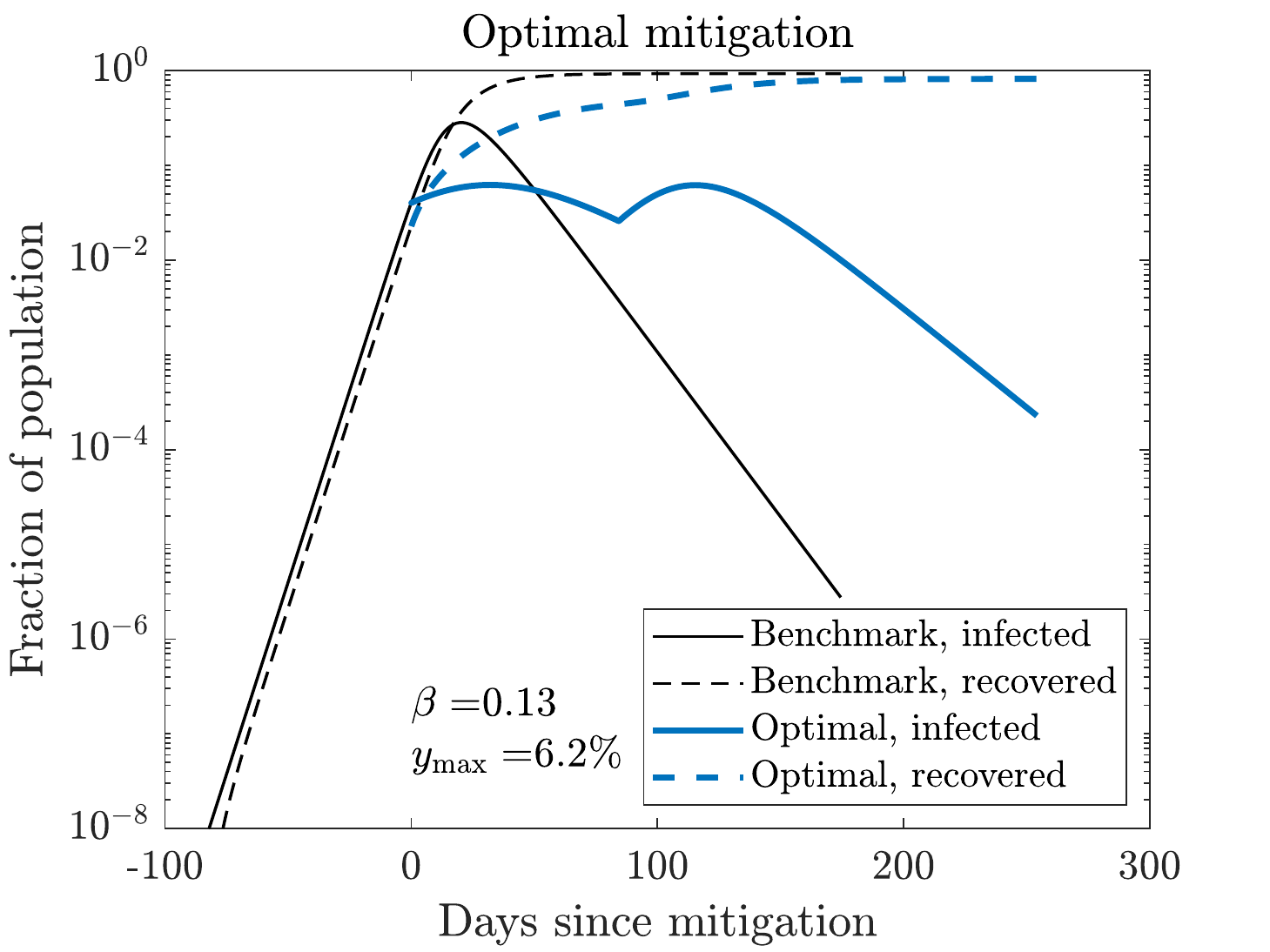}
\end{subfigure}
\caption{Dynamics of epidemic with mitigation measures.}\label{fig:mitigation}
\end{figure}

\section{Asset pricing with epidemic}

To evaluate the economic impact of the COVID-19 epidemic, in this section I solve a stylized production-based asset pricing model.\footnote{\cite{EichenbaumRebeloTrabandtEpidemics} build a quantitative macroeconomic model where economic activity (consumption and work) affects the transmission rate during an epidemic and discuss the optimal containment policy. On the empirical side, \cite{KarlssonNilssonPichler2014} find that the 1918 Spanish flu had negative effects on poverty and capital income but no effect on earnings.}

\subsection{Model}
The economy consists of two agent types, capitalists and workers, who respectively own the capital stock and labor. The capital stock at time $t$ is denoted by $K_t$. The capital growth rate is exogenous, lognormal, and i.i.d.\ over time:
$$\log (K_{t+1}/K_t) \sim N(\mu,\sigma^2).$$
Capitalists hire labor at competitive markets and produce a perishable good using a Cobb-Douglas production technology $Y=K^\alpha L^{1-\alpha}$, where $\alpha\in (0,1)$ is the capital share. The labor supply is exogenous, deterministic, and normalized to 1 during normal times. During an epidemic, workers are either susceptible, infected, or recovered, and only non-infected agents can supply labor. For simplicity, I assume that workers are hand-to-mouth and consume the entire wage. The financial market is complete, and capitalists maximize the constant relative risk aversion (CRRA) utility
$$\E_t\sum_{s=0}^\infty \beta^s\frac{C_{t+s}^{1-\gamma}}{1-\gamma},$$
where $\beta>0$ is the discount factor and $\gamma>0$ is the relative risk aversion coefficient. A stock is a claim to the representative firm's profit $K^\alpha L^{1-\alpha}-wL$, where $w$ is the wage.

Given the sequence of labor supply $\set{L_t}_{t=0}^\infty$, we can solve for the equilibrium stock price semi-analytically as follows. The first-order condition for profit maximization implies $w=(1-\alpha)(K/L)^\alpha$. Hence the firm's profit, which by market clearing must equal consumption of capitalists, is
\begin{equation}
C=K^\alpha L^{1-\alpha}-wL=\alpha K^\alpha L^{1-\alpha}.\label{eq:cons}
\end{equation}
Because the marginal buyer of the stock is a capitalist, the stochastic discount factor of the economy is given by $M_{t+1}=\beta(C_{t+1}/C_t)^{-\gamma}$. Letting $P_t$ be the stock price, the no-arbitrage condition implies
\begin{equation}
P_t=\E_t\left[\beta \left(\frac{C_{t+1}}{C_t}\right)^{-\gamma}(P_{t+1}+C_{t+1})\right].\label{eq:noarbitrage}
\end{equation}
Dividing both sides of \eqref{eq:noarbitrage} by $C_t$, letting $V_t=P_t/C_t$ be the price-dividend ratio, and using \eqref{eq:cons}, we obtain
\begin{align*}
V_t&=\E_t\left[\beta \left(\frac{C_{t+1}}{C_t}\right)^{1-\gamma}(V_{t+1}+1)\right]\\
&=\E_t\left[\beta\left((K_{t+1}/K_t)^\alpha (L_{t+1}/L_t)^{1-\alpha}\right)^{1-\gamma}(V_{t+1}+1)\right].
\end{align*}
Because capital growth is i.i.d.\ normal and labor supply is deterministic, we can rewrite the price-dividend ratio as
\begin{equation}
V_t=\kappa(L_{t+1}/L_t)^{(1-\alpha)(1-\gamma)}(V_{t+1}+1),\label{eq:PDratio}
\end{equation}
where $\kappa=\beta \e^{\alpha(1-\gamma)\mu+[\alpha(1-\gamma)]^2\sigma^2/2}$. In normal times, we have $L_t\equiv 1$ and $V_t\equiv \frac{\kappa}{1-\kappa}$, where we need $\kappa<1$ for convergence. During an epidemic, it is straightforward to compute the price-dividend ratio by iterating \eqref{eq:PDratio} using the boundary condition $V_\infty=\frac{\kappa}{1-\kappa}$.

\subsection{Calibration}

I calibrate the model at daily frequency. I set the capital share to $\alpha=0.38$ and the relative risk aversion to $\gamma=3$, which are standard values. I assume a 4\% annual discount rate, so $\beta=\exp(-0.04/N_d)$, where $N_d=365.25$ is the number of days in a year. To calibrate capital growth and volatility, note that in normal times we have $L=1$ and hence $Y=K^\alpha$. Taking the log difference, we obtain $\log (Y_{t+1}/Y_t)=\alpha \log (K_{t+1}/K_t)$. Therefore according to the model, capital growth rate and volatility are $1/\alpha$ times those of output. I calibrate these parameters from the US quarterly real GDP per capita in 1947Q1--2019Q4 and obtain $\mu=0.0511$ and $\sigma=0.0487$ at the annual frequency.\footnote{At daily frequency, we need to divide these numbers by $N_d$ and $\sqrt{N_d}$, respectively.} For the transmission rate, using the point estimates in Section \ref{sec:estim}, I consider $\beta_0=\betaMed$. The recovery rate is $\gamma_0=0.1$. The initial condition is $(y_0,z_0)=(10^{-8},0)$.

Figure \ref{fig:asset_price} shows the stock price relative to potential output $P_t/Y_t^*$, where $Y_t^*=K_t^\alpha$ is the full employment output. The left and right panels are under the benchmark case and optimal policy, respectively. In the benchmark model, the stock price decreases sharply during the epidemic by about 50\%. However, the stock market crash is short-lived and prices recover quickly after the epidemic. This observation is in sharp contrast to the prediction from rare disasters models \citep{rietz1988,Barro2006QJE}, where shocks are permanent. Under the optimal policy, because the infection rate $y$ has two peaks, the stock price shows a W-shaped pattern. However, the decline is much more moderate at around 10\%.

\begin{figure}
\centering
\begin{subfigure}{0.48\linewidth}
\includegraphics[width=\linewidth]{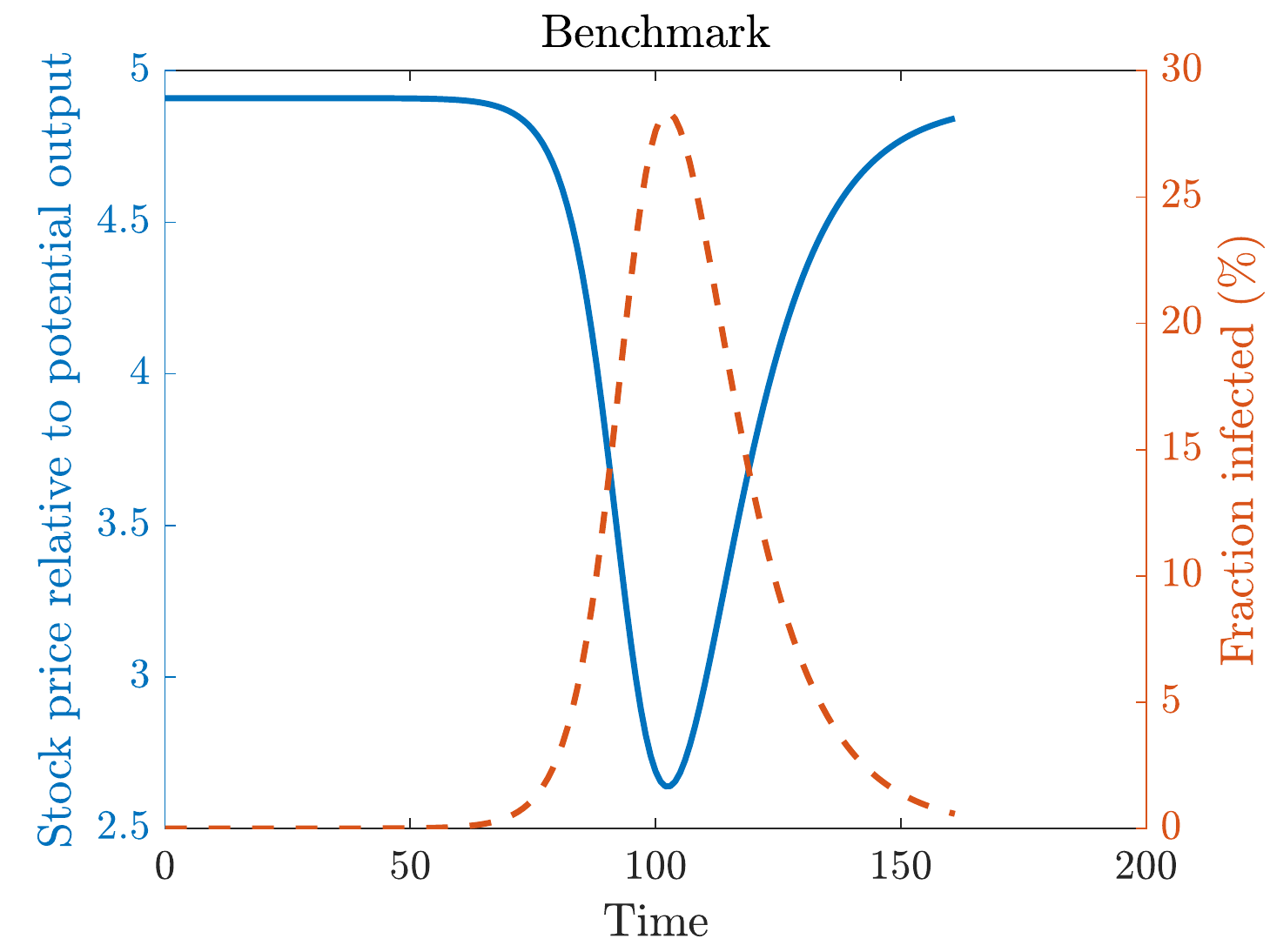}
\end{subfigure}
\begin{subfigure}{0.48\linewidth}
\includegraphics[width=\linewidth]{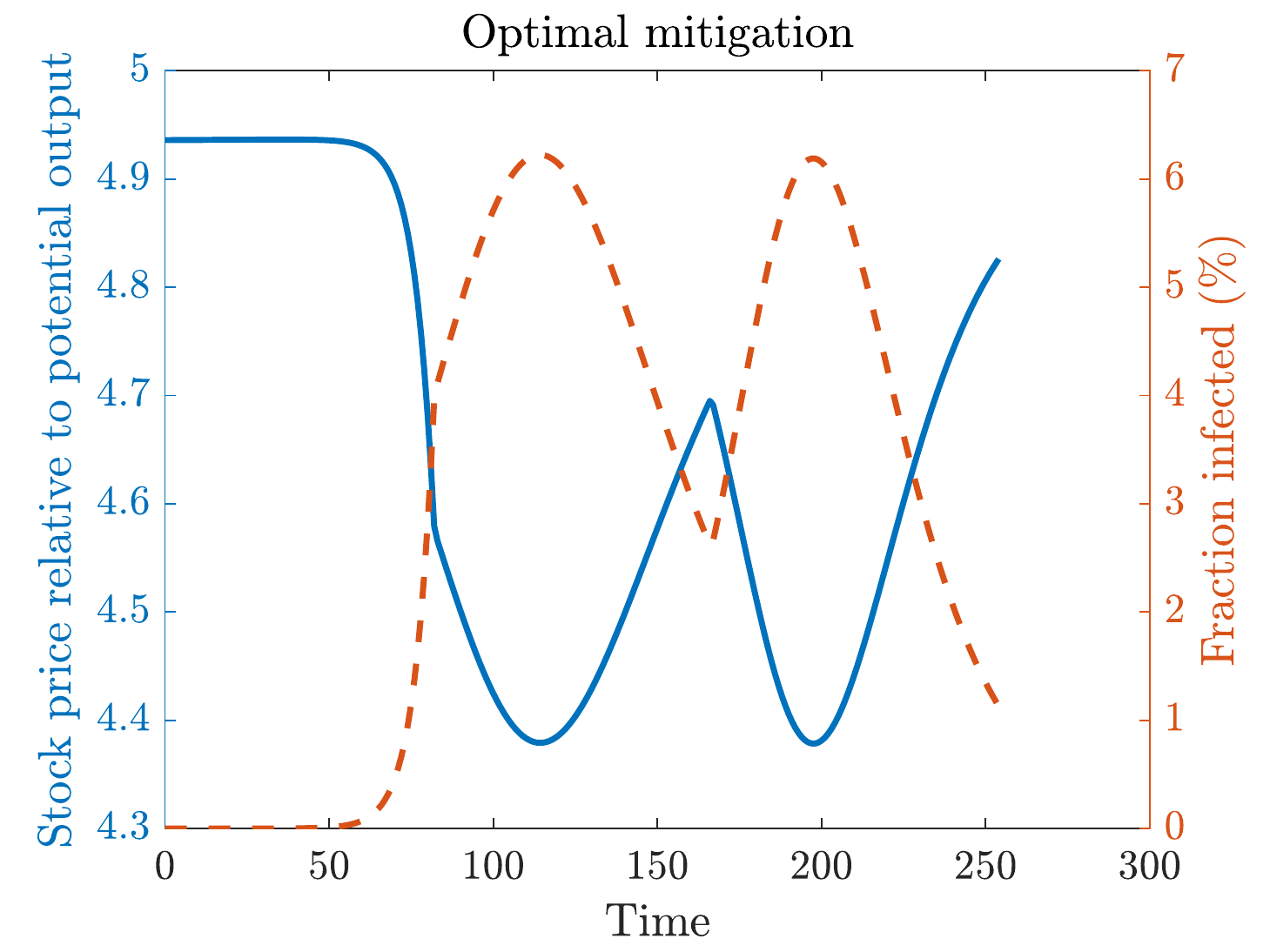}
\end{subfigure}
\caption{Asset prices during epidemic.}\label{fig:asset_price}
\end{figure}

\section{Conclusion}

Because the situation with COVID-19 is rapidly evolving, any analysis based on current data will quickly become out of date. However, any analysis based on available data is better than no analysis. With these caveats in mind, I draw the following conclusions from the present analysis.

The COVID-19 epidemic is spreading except in China, Japan, and Korea. In many countries the transmission rate at present (\today) is very high at around $\beta=0.3$. This number implies that it takes only $1/\beta\approx 3$ days for a patient to infect another individual. Since it takes around 10 days to recover from the illness, the number of patients will grow exponentially and may overwhelm the healthcare system if no actions are taken. If the current trend continues, the epidemic will peak in early May 2020 in Europe and North America, at which point around 30 percent of the population will be infected. Because the recovery rate $\gamma$ is an uncontrollable biological parameter, the only way to control the epidemic is to reduce the transmission rate $\beta$, perhaps by restricting travel or social distancing. However, temporary measures only slows the onset of the epidemic but has no effect on the peak because the epidemic persists until the population acquires herd immunity. The optimal policy that minimizes the peak is to wait to introduce mitigation measures until a sufficient fraction of the population is infected, which can reduce the peak to 6.2\%. Policy makers in affected countries may also want to look at measures taken in China, Japan, and Korea, which are the countries relatively successful at controlling the spread so far.

Using the estimated transmission rates, I have solved a stylized production-based asset pricing model. The model predicts that the stock price decreases by 50\% during the epidemic, but recovers quickly afterwards because the epidemic is a short-lived labor supply shock. Under the optimal policy, the stock price exhibits a W-shaped pattern and remains about 10\% undervalued than the steady state level for half a year.

\newpage

%\bibliographystyle{plainnat}
%\bibliography{reference}

\appendix

\section{Solving the SIR model numerically}\label{sec:solve}

In principle, solving the SIR model numerically is straightforward using the following algorithm.
\begin{enumerate}
\item Given the parameters $(\beta,\gamma)$ and initial condition $(x_0,y_0)$, solve for $v^*$ as the unique solution to \eqref{eq:vstar}.
\item Take a grid $1=v_0>v_1>v_2>\dots>v_N>v^*$. For each $n=1,\dots,N$, compute the integral
\begin{equation}
I_n=\int_{v_n}^{v_{n-1}}\frac{\diff \xi}{\xi(\beta x_0(1-\xi)+\beta y_0+\gamma\log \xi)}\label{eq:In}
\end{equation}
numerically.
\item Define $t_0=0$ and $t_n=\sum_{k=1}^nI_n$ for $n\ge 1$. Compute $(x_n,y_n,z_n)$ using \eqref{eq:HLM} evaluated at $v=v_n$. Then $\set{t_n,(x_n,y_n,z_n)}_{n=0}^N$ gives the numerical solution to the SIR model.
\end{enumerate}

Although the above algorithm is conceptually straightforward, there are two potential numerical issues. First, the integrand
\begin{equation}
g(\xi)\coloneqq \frac{1}{\xi(\beta x_0(1-\xi)+\beta y_0+\gamma \log\xi)}\label{eq:integrand}
\end{equation}
in \eqref{eq:v} is not well-behaved near $\xi=1$. In fact, setting $\xi=1$ we obtain $g(1)=1/\beta y_0$, which is typically a very large number since $y_0$ (the fraction of infected at $t=0$) is typically small, say of the order $10^{-6}$. This makes the numerical integral $I_n$ in \eqref{eq:In} inaccurate for small $n$. Second, for applications we would like the dates $\set{t_n}_{n=0}^N$ to be well-behaved (say approximately evenly spaced), which requires an appropriate choice of the grid $\set{v_n}_{n=0}^N$.

To deal with the first issue, let us express $g$ as $g=h_1+h_2$, where $h_1$ has a closed-form primitive function and $h_2$ is well-behaved near $\xi=1$. Since $\log \xi\approx \xi-1$ near $\xi=1$, a natural candidate is
\begin{align*}
h_1(\xi)&\coloneqq \frac{1}{\xi(\beta x_0(1-\xi)+\beta y_0+\gamma(\xi-1))}\\
&=
\begin{cases}
\frac{1}{\beta(x_0+y_0)-\gamma}\left(\frac{1}{\xi}+\frac{\beta x_0-\gamma}{(\beta x_0-\gamma)(1-\xi)+\beta y_0}\right), & (\beta(x_0+y_0)\neq \gamma)\\
\frac{1}{\beta y_0 \xi^2}. & (\beta(x_0+y_0)=\gamma)
\end{cases}
\end{align*}
Then by simple algebra, \eqref{eq:v} becomes
\begin{multline}
t=\int_v^1h_2(\xi)\diff \xi\\
+\begin{cases}
\frac{1}{\beta(x_0+y_0)-\gamma}\log \frac{(\beta x_0-\gamma)(1-v)+\beta y_0}{\beta y_0v}, & (\beta(x_0+y_0)\neq \gamma)\\
\frac{1}{\beta y_0}\left(\frac{1}{v}-1\right), & (\beta(x_0+y_0)=\gamma)
\end{cases}\label{eq:v2}
\end{multline}
where
\begin{equation}
h_2(\xi)\coloneqq \frac{1}{\xi(\beta x_0(1-\xi)+\beta y_0+\gamma \log\xi)}-\frac{1}{\xi((\beta x_0-\gamma)(1-\xi)+\beta y_0)}.\label{eq:h2}
\end{equation}
Because $h_2(\xi)$ is approximately 0 to the first order around $\xi=1$, we can calculate the numerical integrals in \eqref{eq:In} accurately.

To deal with the second issue, consider the SIR model \eqref{eq:xyz} with $\gamma=0$. Then \eqref{eq:xyz.x} becomes $\dot{x}=-\beta x(1-x)$, and by separation of variables we obtain the analytical solution
$$x(t)=\frac{x_0}{x_0+(1-x_0)\e^{\beta t}}.$$
Using \eqref{eq:HLM.x}, for the case $\gamma=0$, time $t$ and parameter $v$ are related as
$$v=\frac{1}{x_0+(1-x_0)\e^{\beta t}}.$$
Define $t^*$ by
$$v^*=\frac{1}{x_0+(1-x_0)\e^{\beta t^*}}\iff t^*=\frac{1}{\beta}\log \frac{1/v^*-x_0}{1-x_0}.$$
Finally, define
$$v_n=\frac{1}{x_0+(1-x_0)\e^{\beta t^*n/N}}.$$
Then $t_n$ implied by \eqref{eq:v} is evenly spaced when $\gamma=0$, and we can expect that the grid $\set{v_n}_{n=0}^N$ gives reasonable values of $\set{t_n}_{n=0}^N$ even when $\gamma>0$.

For numerical implementation, I set $N=1000$ and use the 11-point Gauss-Legendre quadrature and \eqref{eq:v2} to numerically compute the integral in \eqref{eq:In}. Figure \ref{fig:SIR_example} shows the dynamics of SIR model when $(\beta,\gamma)=(0.2,0.1)$, $y_0=10^{-6},10^{-5},10^{-4}$, and $z_0=0$. For this example, $1-v^*=80.0\%$ of the population is eventually infected, and $y_{\max}=15.4\%$ of the population is infected at the peak of the epidemic. The initial condition ($y_0$) affects the timing of the epidemic but not its dynamics.

\begin{figure}[!htb]
\centering
\includegraphics[width=0.7\linewidth]{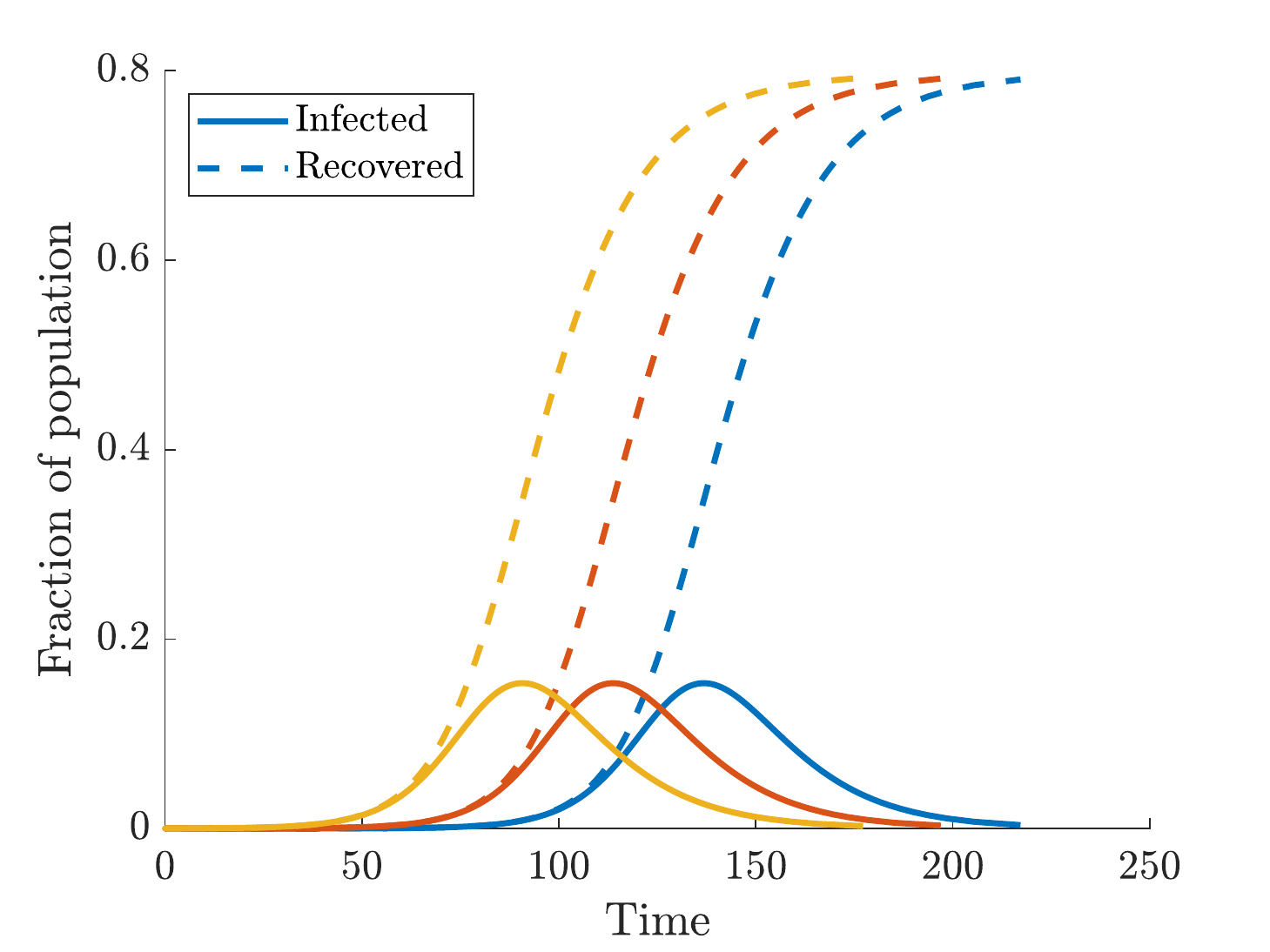}
\caption{Dynamics of SIR model when $(\beta,\gamma)=(0.2,0.1)$, $y_0=10^{-6},10^{-5},10^{-4}$, and $z_0=0$. Smaller $y_0$ corresponds to later onset of epidemic.}\label{fig:SIR_example}
\end{figure}

\end{document}